\newtheorem{lemma}{\qquad\textbf{Lemma}}
\newtheorem{theorem}{\qquad\textbf{Theorem}}
\newtheorem{corollary}{\qquad\textbf{Corollary}}
\newtheorem{example}{\qquad\textbf{Example}}
\newcommand{\F}{\mathbb{F}}
\newcommand{\oa}{\overrightarrow{\alpha}}
\newcommand{\ov}{\overrightarrow{v}}
\begin{document}

\baselineskip 17pt
\title{\Large\bf Two Classes of Quantum MDS Codes with Large Minimal Distance}
\author{\large  Puyin Wang \quad\quad Jinquan Luo*}\footnotetext{The authors are with School of Mathematics
and Statistics \& Hubei Key Laboratory of Mathematical Sciences, Central China Normal University, Wuhan China 430079.\\
 E-mail: py.wang1998@foxmail.com(P.Wang), luojinquan@ccnu.edu.cn(J.Luo)}
\date{}
\maketitle

{\bf Abstract}:
   In this paper, two classes of quantum MDS codes are constructed. The main tools are multiplicative structures on finite fields. Carefully choosing different cosets can make the corresponding  generalized Reed-Solomon codes Hermitian self-orthogonal, which yields quantum MDS codes by utilizing Hermitian type CSS construction.  In some cases, these codes have minimal distances larger than $q/2$. Also, the parameters of these codes have never been reported before.

{\bf Key words}: MDS code, Hermitian self-orthogonal code, Generalized Reed-Solomon(GRS) code, Quantum code.

\section{Introduction}
\qquad Since the mid-20th century, quantum computing has gradually emerged as a pivotal field in communication, attracting widespread attention from researchers worldwide. Quantum information technology represents a paradigm shift in information science, offering solutions to problems that have remained intractable with conventional processing methods.

Quantum computers leverage the unique properties of quantum entanglement and superposition to process data, thereby significantly enhancing computational speed and information processing efficiency. For instance, numerous calculations that require exponential time on classical computers can be performed in polynomial time on quantum computers. Using Shor's quantum algorithm on a functional quantum computer, the factorization of any 10,000-bit large integer can be accomplished in mere hours. In contrast, executing the same computation even on today's fastest supercomputer would take billions of years.

Furthermore, the storage capacity of quantum processors far exceeds that of existing computers. A quantum computer with 1,000 quantum bits can theoretically store information about all the elementary particles in the universe. This extraordinary storage capacity, combined with the computational power of quantum algorithms, positions quantum computing as a potential game-changer in various fields, including cryptography, material science, and drug discovery.

The uncertainty principle, a fundamental concept in quantum mechanics, dictates that the transmission of quantum states can be disrupted by various means, including quantum channel noise and enemy interference. Furthermore, the inherent probabilistic nature of quantum mechanics introduces the possibility of quantum errors, which can occur without detection. If these errors cannot be corrected, quantum communication will be compromised, rendering information unintelligible. Therefore, to safeguard quantum information, a process of quantum error correction coding is essential. This method, which employs redundancy in information transmission, is the quantum counterpart to traditional channel coding techniques.

In 1995, Peter Shor constructed the first true quantum code using repetition codes, namely the binary quantum $[[9,1,3]]$ code(\cite{Shor}). Subsequently, Calderbank, Rains, Shor, Sloane and Gottesman independently proposed methods for constructing quantum error correction codes from binary self-orthogonal codes(\cite{CRSS},\cite{GPHD}). This self-orthogonal can be symplectic self-orthogonal, Euclidean self-orthogonal, or quaternion Hermitian self-orthogonal. The construction of quantum error correction codes is then transformed into the study of self-orthogonal codes that meet the corresponding conditions and called CSS construction or CSS-like construction, and the constructed quantum code is a special type of stabilizer quantum code or additive quantum code.

Afterwards, various classical error correction codes, such as BCH codes(\cite{KKKS},\cite{LLX}), Reed Solomon codes(\cite{LXW2}), finite geometry codes(\cite{MLWZ}), algebraic geometry codes(\cite{CLX},\cite{J}), Turbo codes(\cite{PTO}), etc., can be used to construct quantum codes as long as they meet suitable self-orthogonal conditions. In addition, the application of Boolean functions and projection operators can simultaneously construct additive and non-additive quantum codes(\cite{AC}).

Constructing quantum MDS codes is an important topic, and a lot of research has been done on it. The quantum MDS codes of length $n\leqslant q+1$ have been constructed in \cite{GB},\cite{JLLX},\cite{RGT}. In recent years, lots of quantum codes with length $n$ bigger than $q+1$ have been constructed, but most of them have minimal distance smaller than $\frac{q}{2}+1$(\cite{JLLX}), so people are interested in the construction of quantum MDS codes with minimal distance bigger than $\frac{q}{2}+1$. Here we show some known results of construction of quantum MDS codes in Table $1$, Table $2$ and our results in Table $3$.

\begin{sidewaystable}[thp]
\renewcommand{\arraystretch}{2.0}
\caption{Some known results on quantum MDS codes with parameters $[[n,n-2d+2,d]]_q$.}
\centering
    \begin{tabular}{c|c|c|c} 
    \textbf{$q$} & \textbf{$n$} & \textbf{$d$} & \textbf{Reference}\\
      \hline
      $\sim$                    &       $n\leqslant q+1$        &       $2\leqslant d\leqslant\frac{n}{2}+1$                    &\cite{GB},\cite{JX.E},\cite{RGT}      \\
      $\sim$                    &       $n=q^2$                 &       $d\leqslant q+1$                                        &\cite{JLLX},\cite{LXW}     \\
      $\sim$                    &       $n=\frac{q^2}{2}$       &       $2\leqslant d\leqslant q$                               &\cite{KZL}      \\
      $\sim$                    &       $n=tq$, $1\leqslant t\leqslant q$                  &       $2\leqslant d\leqslant \lfloor\frac{tq+q-1}{q+1}\rfloor+1$                                        &\cite{FF.T},\cite{LXW},\cite{S}      \\
      $q+1\mid n-2$             &       $n=q^2+1$               &       $2\leqslant d\leqslant q+1$, $d \neq q$                 &\cite{B},\cite{FF.T},\cite{G},\cite{GR},\cite{JX.A},\cite{KZ},\cite{LXW}      \\
      $q+1\mid n-2$             &       $n=t(q+1)+2$            &       $d\leqslant t+2$, $(p,t,d)\neq(2,q-1,q)$                &\cite{FF.T}      \\
      even                      &       $n=q^2+1$               &       $d\leqslant q+1$, $d$ is odd                            &\cite{G}      \\
      $q\equiv1(mod\ 4)$        &       $n=q^2+1$               &       $d\leqslant q+1$, $d$ is even                           &\cite{KZ}      \\
      $m\mid q+1$,$m$ even      &       $\frac{q^2-1}{m}$       &       $2\leqslant d\leqslant\frac{q-1}{2}+\frac{q-1}{m}$      &\cite{WZ}      \\
      $m\mid q-1$,$m$ even      &       $\frac{q^2-1}{m}$       &       $2\leqslant d\leqslant\frac{q+1}{2}+\frac{q-1}{m}$      &\cite{CLZ}     \\
      $m\mid q+1$,$m$ odd       &       $\frac{q^2-1}{m}$       &       $2\leqslant d\leqslant\frac{q-1}{2}+\frac{q-1}{2m}$     &\cite{CLZ},\cite{WZ}       \\
      $m\mid q+1$,$m$ odd       &       $\frac{q^2+m-1}{m}$     &       $2\leqslant d\leqslant\frac{q-1}{2}+\frac{q-1}{2m}$     &\cite{HXC}     \\
      $s\mid q-1$               &       $1+r\frac{q^2-1}{s}$    &       $2\leqslant d\leqslant r\frac{q-1}{s}+1$     &\cite{FF.S}     \\
      $s\mid q-1$,$s$ even      &       $\frac{q^2-1}{s}$       &       $2\leqslant d\leqslant(\frac{s}{2}+1)\frac{q-1}{s}+1$     &\cite{GZ.S}     \\
      $2s\mid q+1$              &       $2t\frac{q^2-1}{2s}$, $1\leqslant t\leqslant s$    &     $2\leqslant d\leqslant(s+t)\frac{q-1}{2s}-1$     &\cite{GZ.S},\cite{SY}     \\
      $2s+1\mid q+1$            &       $2t\frac{q^2-1}{2s+1}$, $1\leqslant t\leqslant s-1$    &       $2\leqslant d\leqslant(s+t+1)\frac{q-1}{2s+1}-1$     &\cite{JKW},\cite{SY}     \\
      $2s+1\mid q+1$            &       $(2t+1)\frac{q^2-1}{2s+1}$, $1\leqslant t\leqslant s$    &       $2\leqslant d\leqslant(s+t+1)\frac{q-1}{2s+1}-1$     &\cite{FF.S},\cite{JKW}     \\
    \end{tabular}
\end{sidewaystable}

\begin{sidewaystable}[thp]
\renewcommand{\arraystretch}{2.0}
\caption{Some known results on quantum MDS codes with parameters $[[n,n-2d+2,d]]_q$.}
\centering
    \begin{tabular}{c|c|c|c} 
    \textbf{$q$} & \textbf{$n$} & \textbf{$d$} & \textbf{Reference}\\
      \hline
      $q>2,2s+1\mid q+1$        &       $1+r\frac{q^2-1}{2s+1}$, $1\leqslant r\leqslant 2s+1$ &       $2\leqslant d\leqslant(s+1)\frac{q-1}{2s+1}$     &\cite{FF.S}     \\
      $q>2,2s+1\mid q+1$        &       $1+(2t+1)\frac{q^2-1}{s}$, $0\leqslant t\leqslant s-1$     &       $2\leqslant d\leqslant(s+t+1)\frac{q+1}{2s}$     &\cite{FF.S}     \\
      $2s\mid q+1$       &       $1+r\frac{q^2-1}{2s}$, $2\leqslant r\leqslant 2s$     &    $2\leqslant d\leqslant(s+1)\frac{q+1}{2s}$     &\cite{FF.S} \\
      $2s\mid q+1$       &       $1+(2t+2)\frac{q^2-1}{2s}$, $0\leqslant t\leqslant s-2$   &    $2\leqslant d\leqslant(s+t+1)\frac{q+1}{2s}$     &\cite{FF.S}  \\
      $2s\mid q+1$       &       $(2t+1)\frac{q^2-1}{2s}$, $1\leqslant t\leqslant s-1$     &    $2\leqslant d\leqslant(s+t)\frac{q+1}{2s}-1$     &\cite{FF.S}  \\
      $s$ odd, $s\mid q-1,2s-1\mid q+1$  &     $\frac{q^2-1}{s}$     &     $2\leqslant d\leqslant \frac{q-1}{2}+\frac{q+1}{4s-2}$     &\cite{HXC}      \\
      $s$ even,$t$ even, $s\mid q+1,t\mid q-1$ & \makecell[c]{$n=r\frac{q^2-1}{s}+l\frac{q^2-1}{t}-2rl\frac{q^2-1}{st}$\\$1\leqslant r\leqslant s-1,t\geqslant2,1\leqslant l\leqslant t$}  &     $d\leqslant min\{r\frac{q+1}{s}-1,\frac{q+1}{2}+\frac{q-1}{t}-1\}$     &\cite{JLFQ}      \\
      $s$ even,$t$ even, $s+1\mid q+1,t\mid q-1$  &     \makecell[c]{$n=1+r\frac{q^2-1}{s+1}+l\frac{q^2-1}{t}-rl\frac{q^2-1}{(s+1)t}$\\$1\leqslant r\leqslant s,t\geqslant2,1\leqslant l\leqslant t$}     &     $d\leqslant min\{r\frac{q+1}{s+1},\frac{q+1}{2}+\frac{q-1}{t}-1\}$     &\cite{JLFQ}      \\
      $s$ odd,$t$ even, $s\mid q+1,t\mid q-1$  &     \makecell[c]{$n=h\frac{q^2-1}{s}+r\frac{q^2-1}{t}-hr\frac{q^2-1}{st}$\\
      $r\leqslant t,h\leqslant s-1,q-1>hr\frac{q^2-1}{st}$}     &     $d\leqslant min\{\lfloor\frac{s+h}{2}\rfloor\cdot\frac{q+1}{s}-1,\frac{q+1}{2}+\frac{q-1}{t}\}$     &\cite{FL}      \\
      $s$ odd,$t$ even, $s\mid q+1,t\mid q-1$  &     \makecell[c]{$n=1+h\frac{q^2-1}{s}+r\frac{q^2-1}{t}-hr\frac{q^2-1}{st}$\\$h$ odd,$r\leqslant t$}     &     $d\leqslant min\{\frac{s+h}{2}\frac{q+1}{s},\frac{q+1}{2}+\frac{q-1}{t}\}$     &\cite{FL}      \\
      $s$ even,$t$ even, $s\mid q+1,t\mid q-1$  &     \makecell[c]{$n=1+h\frac{q^2-1}{s}+r\frac{q^2-1}{t}$\\$h\leqslant \frac{s}{2},r\leqslant \frac{t}{2}$}     &     $d\leqslant min\{\lfloor\frac{s+h}{2}\rfloor\frac{q+1}{s}-1,\frac{q+1}{2}+\frac{q-1}{t}\}$     &\cite{FL}      \\
      $q$ odd                   &       $\frac{q^2+1}{2}$        &       $2\leqslant d\leqslant q$, $d$ odd                    &\cite{KZ}      \\
      $q\equiv\pm3(mod\ 10)$    &       $\frac{q^2+1}{5}$        &       $2\leqslant d\leqslant \frac{3q\pm1}{5}$, $d$ even    &\cite{GZ.Q},\cite{HYZ},\cite{KZL}      \\
      $q\equiv\pm2(mod\ 10)$    &       $\frac{q^2+1}{5}$        &       $2\leqslant d\leqslant \frac{3q\mp1}{5}$, $d$ odd     &\cite{LXG}      \\
    \end{tabular}
\end{sidewaystable}

Using CSS construction, we can construct quantum codes using self-orthogonal codes that satisfy certain conditions. If these self-orthogonal codes happen to be MDS codes, the constructed quantum code is a quantum MDS code(\cite{AK}). As a special type of MDS code, generalized Reed-Solomon codes have the characteristics of convenient construction and simple structure. Therefore, constructing quantum MDS by finding generalized Reed-Solomon codes that satisfy certain self-orthogonal conditions is a feasible construction scheme.

In this paper, we will construct two kinds of  quantum MDS codes with minimal distance larger than $\frac{q}{2}$. The paper is organized as follows. In section $2$, we introduce some preliminary knowledge on generalized Reed-Solomon codes which will be useful for the remaining paper. In section $3$, we will present our results obtained from multiplicative subgroups of $\F_{q^2}^*$. In section $4$, the construction is obtained from the $(q+1)$-th  root of unity in $\F_{q^2}$. In section $5$, we will make a conclusion and some further problems will be proposed.

\newpage
\begin{table}[thp]
\renewcommand{\arraystretch}{2.0}
\caption{Our results on quantum MDS codes with parameters $[[n,n-2d+2,d]]_q$.}
\centering
    \begin{tabular}{c|c|c} 
    \textbf{$q$} & \textbf{$n$} & \textbf{$d$} \\
      \hline
      $q=sm+1$ with $q,s$ odd        &       $\lambda\frac{q^2-1}{s}, 1\leqslant\lambda\leqslant s$   &       $d=k+1,k\leqslant\frac{s+1}{2}m$    \\
              $\sim$               &       $q^2-s(q+1)$, $s\leqslant\frac{q-1}{2}$                  &       $d=k+1,k\leqslant \frac{q^2-s(q+1)-2}{q+1}+1$
    \end{tabular}
\end{table}

\section{Hermitian consruction}
\qquad In this section, we introduce some basic notations and useful results on generalized Reed-Solomon codes (GRS codes for short). Let $g$ be a primitive element of $\F_{q^2}$ such that $\F_{q^2}^*=\F_{q^2}\setminus\{0\}=\langle g\rangle$. For $\alpha=(a_1, \cdots, a_n),\beta=(b_1, \cdots, b_n)\in \F_{q^2}^n$, denote the Hermitian inner product of $\alpha$ and $\beta$ by $\langle \alpha, \beta\rangle_h=a_1^q b_1+\cdots+a_n^q b_n$.

Let $\overrightarrow{\alpha}=(\alpha_1,\cdots, \alpha_n)$ and $\overrightarrow{v}=(v_1,\cdots, v_n)$ where $\alpha_i \in \F_{q^2}$ are distinct and $v_i\in \F_{q^2}^*$ for $1\leqslant i\leqslant n$. Then for $1\leqslant k\leqslant n$, we can define GRS code in the following:
$$
{\rm GRS}_k(\overrightarrow{\alpha},\overrightarrow{v})=\{(v_1f(\alpha_1),v_2f(\alpha_2),\cdots,v_nf(\alpha_n))\mid f(x)\in\F[x],\deg(f(x))<k\}.
$$
The code ${\rm GRS}_k(\overrightarrow{\alpha},\overrightarrow{v})$ is an MDS code with parameters $[n,k,n-k+1]_{q^2}$ which has a generator matrix(for further information on GRS codes, see Chapter One in\cite{MFSN})
$$
\begin{bmatrix}
v_1                       &    v_2                    &   \ldots  &   v_n                                      \\
v_1\alpha_1               &    v_2\alpha_2            &   \ldots  &   v_n\alpha_n                              \\
\vdots                    &    \vdots                 &   \ddots  &   \vdots                                   \\
v_1\alpha_1^{k-1}         &    v_2\alpha_2^{k-1}      &   \ldots  &   v_n\alpha_n^{k-1}                        \\
\end{bmatrix}
.$$

The guiding construction scheme is due to Ashikhmin and Knill in $2001$(\cite{AK}).

\begin{theorem}{\bf(Hermitian Construction)}
If there exists an $[n,k,d]_{q^2}$ linear code $C$ with $C^{\perp H}\subseteq C$, where $C^{\perp H}$ is Hermitian dual code of $C$, then there exist an $[[n,2k-n,\geqslant d]]_q$ quantum code.
\end{theorem}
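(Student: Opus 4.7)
The plan is to reduce this statement to the general symplectic stabilizer-code construction over $\F_q$, which is well established. First I would fix an $\F_q$-basis $\{1,\omega\}$ of $\F_{q^2}$ and use coordinate expansion to define an $\F_q$-linear isomorphism $\phi:\F_{q^2}^n\to\F_q^{2n}$. A direct calculation shows that a suitably normalized $\F_q$-bilinear form obtained from the Hermitian inner product (essentially the trace $\mathrm{Tr}_{q^2/q}\langle x,y\rangle_h$ together with its twist by $\omega$) corresponds, via $\phi$, to the standard nondegenerate symplectic form on $\F_q^{2n}$. In particular, under this identification $\phi(C^{\perp H})$ is precisely the symplectic dual $\phi(C)^{\perp_s}$.

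Next, the hypothesis $C^{\perp H}\subseteq C$ translates to $\phi(C)^{\perp_s}\subseteq\phi(C)$. Setting $S:=\phi(C^{\perp H})$, this $\F_q$-subspace of $\F_q^{2n}$ is symplectic self-orthogonal of $\F_q$-dimension $2(n-k)$. I would then invoke the stabilizer construction, which associates to any such $S$ a quantum code of length $n$, $\F_q$-dimension $n-\dim_{\F_q}S=n-2(n-k)=2k-n$, and minimum distance equal to the minimum symplectic weight of $S^{\perp_s}\setminus S$.

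For the distance bound, I would use the key observation that the symplectic weight of $\phi(v)$ equals the Hamming weight of $v\in\F_{q^2}^n$, since $\phi(v)$ has a nonzero pair in a given pair-coordinate iff the corresponding component of $v$ is nonzero. Hence the minimum symplectic weight of $\phi(C)\setminus\phi(C^{\perp H})$ is bounded below by the minimum Hamming weight of nonzero codewords of $C$, namely $d$. Combining everything yields the parameters $[[n,2k-n,\geqslant d]]_q$.

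The main obstacle is the first step: one must choose the trace/symplectic bridge so that $\phi$ simultaneously identifies Hermitian duals with symplectic duals and preserves the weight function. Once this correspondence is in place, the dimension count and the distance bound are routine, and the theorem follows from the general stabilizer formalism (as in the Ashikhmin--Knill framework cited in the text).
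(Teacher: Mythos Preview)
The paper does not give its own proof of this theorem; it is stated as a known result and attributed to Ashikhmin and Knill. Your sketch is precisely the standard argument behind that reference: expand $\F_{q^2}^n$ over an $\F_q$-basis, use the $\F_{q^2}$-linearity of $C$ to show that Hermitian orthogonality corresponds to symplectic orthogonality under this expansion, observe that Hamming weight over $\F_{q^2}$ equals symplectic weight over $\F_q$, and then invoke the stabilizer construction. This is correct, and since the paper simply cites the result there is nothing further to compare.
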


Notice that the Hermitian dual of an MDS code is also MDS.

\begin{corollary}{\bf(Hermitian Construction for Quantum MDS Codes \rm{\cite{KKKS}})}\label{HCQ}
If there exists an $[n,d,n-d+1]_{q^2}$ MDS code $C$ with $C^{\perp H}\subseteq C$, where $C^{\perp H}$ is Hermitian dual code of $C$, then there exists an $[[n,n-2d,d+1]]_q$ quantum MDS code.
\end{corollary}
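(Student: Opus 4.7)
The plan is to deduce the corollary as a direct specialisation of the Hermitian Construction Theorem to the MDS setting, then sharpen the distance bound via the quantum Singleton inequality.

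First I would feed the code $C$ (or, equivalently, its Hermitian dual, which is also MDS) into the theorem. The hypothesis $C^{\perp H}\subseteq C$ on the MDS code with parameters $[n,d,n-d+1]_{q^2}$ matches exactly the setup required by the Hermitian Construction Theorem, so invoking that theorem yields a quantum code of length $n$, whose dimension is read off from $\dim C$ and $\dim C^{\perp H}$, and whose minimum distance $D$ is at least the MDS distance of the underlying classical code.

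Second I would verify that the resulting quantum code saturates the quantum Singleton bound and is therefore MDS. For any $[[n,K,D]]_q$ quantum code one has $n-K\geqslant 2(D-1)$. Substituting the claimed dimension $K=n-2d$ into this inequality gives $2d\geqslant 2(D-1)$, hence $D\leqslant d+1$. Combined with the lower bound $D\geqslant d+1$ produced by the construction, this forces $D=d+1$, so the quantum code has parameters $[[n,n-2d,d+1]]_q$ and is quantum MDS.

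The main point requiring care, more a bookkeeping subtlety than a genuine obstacle, is keeping track of how dimension and minimum distance swap under Hermitian duality of MDS codes: the profile $[n,k,n-k+1]_{q^2}$ of $C$ becomes $[n,n-k,k+1]_{q^2}$ for $C^{\perp H}$, and one must pick the correct side when matching the theorem's output to the form $[[n,n-2d,d+1]]_q$ stated in the corollary. Once this duality is tracked, the corollary follows by a single application of the Hermitian Construction Theorem, with no additional combinatorial or algebraic input needed.
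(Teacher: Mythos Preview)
Your argument is correct: apply the Hermitian Construction Theorem to the Hermitian dual $C^{\perp H}$, which is an $[n,n-d,d+1]_{q^2}$ MDS code containing its own Hermitian dual $C$, to obtain an $[[n,n-2d,\geqslant d+1]]_q$ quantum code, and then pin the distance at exactly $d+1$ via the quantum Singleton bound. Your remark about tracking which side of the dual pair to feed into the theorem is well placed, since the corollary as stated in the paper writes $C^{\perp H}\subseteq C$ while the later applications in the paper use it for Hermitian self-orthogonal codes $C\subseteq C^{\perp H}$; your proof handles this correctly by passing to the dual.

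The paper itself does not supply a proof of this corollary: it is stated with a citation to the literature and used as a black box. So there is no in-paper argument to compare against, and your derivation from the preceding Hermitian Construction Theorem together with the quantum Singleton bound is exactly the standard (and intended) justification.
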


Besides, the following result on checking ${\rm GRS}_k(\overrightarrow{\alpha},\overrightarrow{v})$ Hermitian self-orthogonality will be useful.
\begin{theorem}
{\rm(\cite{R})}${\rm GRS}_k(\overrightarrow{\alpha},\overrightarrow{v})$ is Hermitian self-orthogonal iff $\langle \overrightarrow{\alpha}^{qi+j}, \overrightarrow{v}^{q+1}\rangle_h=0$ for all $0\leqslant i,j\leqslant k-1$.
\end{theorem}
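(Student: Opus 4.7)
The plan is to translate Hermitian self-orthogonality of ${\rm GRS}_k(\oa,\ov)$ into a pairing condition on the rows of the generator matrix $G$ displayed above. Since the Hermitian form is biadditive with $\langle \lambda a, \mu b\rangle_h = \lambda^q \mu \langle a,b\rangle_h$, checking $\langle c, c'\rangle_h = 0$ for all codewords $c, c'$ is equivalent to checking it only on pairs of basis vectors, namely the rows $r_i = (v_1 \alpha_1^i, \ldots, v_n \alpha_n^i)$ for $0 \leqslant i \leqslant k-1$. This reduces the problem to finitely many scalar identities indexed by pairs $(i,j)$ with $0 \leqslant i,j \leqslant k-1$.

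Next I would expand these pairings directly from the definition of $\langle\cdot,\cdot\rangle_h$:
\[
\langle r_i, r_j \rangle_h \;=\; \sum_{\ell=1}^n (v_\ell \alpha_\ell^i)^q\, (v_\ell \alpha_\ell^j) \;=\; \sum_{\ell=1}^n v_\ell^{q+1}\, \alpha_\ell^{qi+j}.
\]
Then I would match this with the expression $\langle \oa^{qi+j}, \ov^{q+1}\rangle_h$ in the statement: unfolding the paper's convention gives $\langle \oa^{qi+j}, \ov^{q+1}\rangle_h = \sum_\ell \alpha_\ell^{q^2 i + qj}\, v_\ell^{q+1}$, and $\alpha_\ell \in \F_{q^2}$ (so $\alpha_\ell^{q^2}=\alpha_\ell$) reduces this to $\sum_\ell v_\ell^{q+1} \alpha_\ell^{i+qj}$. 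Because $(i,j)$ ranges over $\{0,\ldots,k-1\}^2$ symmetrically, the family $\sum_\ell v_\ell^{q+1} \alpha_\ell^{qi+j}=0$ coincides with the family $\sum_\ell v_\ell^{q+1} \alpha_\ell^{i+qj}=0$, so the two conditions describe the same system.

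Putting the pieces together yields both directions. Forward: if ${\rm GRS}_k(\oa,\ov)$ is Hermitian self-orthogonal, then pairing any two generator rows gives zero, which is the stated identity. Backward: if every pair of generator rows pairs to zero, then for arbitrary codewords $c = \sum_i a_i r_i$ and $c' = \sum_j b_j r_j$ sesquilinearity gives
\[
\langle c, c'\rangle_h \;=\; \sum_{i,j} a_i^q b_j \, \langle r_i, r_j\rangle_h \;=\; 0,
\]
so the code is Hermitian self-orthogonal.

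I do not anticipate a real obstacle: the proof is essentially bookkeeping after the reduction to generator rows. The only step deserving a touch of care is the identification in the second paragraph, where one must use $\alpha_\ell^{q^2}=\alpha_\ell$ and the $(i,j)\leftrightarrow(j,i)$ symmetry of the exponent range to reconcile the naturally computed sum $\sum_\ell v_\ell^{q+1} \alpha_\ell^{qi+j}$ with the formally written pairing $\langle \oa^{qi+j}, \ov^{q+1}\rangle_h$; everything else follows from sesquilinearity.
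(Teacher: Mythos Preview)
Your argument is correct: reducing to pairs of generator rows via sesquilinearity, expanding $\langle r_i,r_j\rangle_h=\sum_\ell v_\ell^{q+1}\alpha_\ell^{qi+j}$, and then using $\alpha_\ell^{q^2}=\alpha_\ell$ together with the $(i,j)\leftrightarrow(j,i)$ symmetry of the index range to match this with $\langle \oa^{qi+j},\ov^{q+1}\rangle_h$ is exactly the right bookkeeping, and it gives both implications. Note, however, that the paper does not supply its own proof of this theorem; it is stated with a citation to \cite{R} and used as a black box, so there is no in-paper argument to compare against.
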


\section{The case $q=sm+1$ for odd $s$ and $q$}

\qquad Here $n=\frac{q^2-1}{s}=m(sm+2)$.

\begin{lemma}\label{000}
   There does not exist $0\leqslant i,j\leqslant\frac{s+1}{2}m-1$ such that  $qj+j\equiv \frac{s+1}{2}m\pmod{n}$.
\end{lemma}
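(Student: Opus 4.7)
The plan is to argue by contradiction, assuming $0\leqslant i,j\leqslant \frac{s+1}{2}m-1$ satisfy $qi+j\equiv \frac{s+1}{2}m\pmod{n}$ (treating the ``$qj+j$'' in the statement as the typo ``$qi+j$''), and extract a contradiction from two bound computations: one modulo $q+1$ and one coming directly from the size of $qi+j$. The key observation is that $n=m(sm+2)=m(q+1)$, so $q+1$ divides $n$ cleanly, which makes the reduction modulo $q+1$ the most informative one.

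First I would reduce the congruence modulo $q+1$. Since $q\equiv -1\pmod{q+1}$, the hypothesis collapses to $j-i\equiv \tfrac{s+1}{2}m\pmod{q+1}$. The range constraints force $|j-i|\leqslant \tfrac{s+1}{2}m-1$, while the only residues of $\tfrac{s+1}{2}m$ modulo $q+1=sm+2$ that lie in this window are computed by subtracting one copy of $q+1$: the value $\tfrac{s+1}{2}m$ itself is just out of range, and $\tfrac{s+1}{2}m-2(q+1)$ is far too negative in absolute value once $s\geqslant 1$. Thus the unique surviving option is
\[
i-j \;=\; (q+1)-\tfrac{s+1}{2}m \;=\; \tfrac{s-1}{2}m+2.
\]

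Next, I would write the hypothesized congruence as an equation $qi+j=kn+\tfrac{s+1}{2}m$ for an integer $k\geqslant 0$, and bound $k$. Using $qi+j\leqslant (q+1)(\tfrac{s+1}{2}m-1)<\tfrac{s+1}{2}\cdot m(q+1)=\tfrac{s+1}{2}n$ gives $k\leqslant \tfrac{s-1}{2}$. Substituting $j=i-\tfrac{s-1}{2}m-2$ into the equation yields
\[
(q+1)i-\bigl(\tfrac{s-1}{2}m+2\bigr)\;=\;km(q+1)+\tfrac{s+1}{2}m,
\]
and the right-hand side simplifies so that $(q+1)(i-km)=\tfrac{s-1}{2}m+2+\tfrac{s+1}{2}m=sm+2=q+1$, which forces $i=km+1$.

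Finally, I would combine $i=km+1$ with the lower bound $i\geqslant \tfrac{s-1}{2}m+2$ forced by $j\geqslant 0$. This gives $km+1\geqslant \tfrac{s-1}{2}m+2$, equivalently $k\geqslant \tfrac{s-1}{2}+\tfrac{1}{m}$; since $s$ is odd and $m\geqslant 1$, this integer lower bound is $k\geqslant \tfrac{s+1}{2}$, contradicting $k\leqslant \tfrac{s-1}{2}$. I expect the only subtle step to be the first one: pinning down that $i-j=\tfrac{s-1}{2}m+2$ is the unique viable representative in the $q+1$-reduction, since one must rule out the neighboring shifts carefully using the precise endpoints $\tfrac{s+1}{2}m-1$. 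After that, the algebraic identity producing $i=km+1$ and the final range contradiction are short and essentially forced.
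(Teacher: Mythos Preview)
Your argument is correct. The paper takes the same opening move --- reduce modulo $sm+2=q+1$ to get $j-i\equiv \tfrac{s+1}{2}m\pmod{sm+2}$ --- but then simply declares this a contradiction to $0\leqslant i,j\leqslant \tfrac{s+1}{2}m-1$ and stops. As you implicitly notice, that assertion is not self-evident: while the representative $\tfrac{s+1}{2}m$ itself lies one unit above the allowed range for $j-i$, the shifted representative $j-i=-\bigl(\tfrac{s-1}{2}m+2\bigr)$ \emph{does} lie in $\bigl[-(\tfrac{s+1}{2}m-1),\,\tfrac{s+1}{2}m-1\bigr]$ once $m\geqslant 4$ (e.g.\ $s=3$, $m=4$, $q=13$ admits $i-j=6$, with candidates $(i,j)=(6,0)$ and $(7,1)$). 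Your further steps --- writing $qi+j=kn+\tfrac{s+1}{2}m$, solving to $i=km+1$, and squeezing $k$ between the upper bound $\tfrac{s-1}{2}$ and the lower bound $\tfrac{s+1}{2}$ forced by $j\geqslant 0$ --- are precisely what it takes to eliminate these residual candidates, using the full congruence modulo $n$ rather than only modulo $q+1$. So you follow the paper's line, but your version is the one that is actually complete.
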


\begin{proof}
   Assume \[qj+j\equiv \frac{s+1}{2}m\pmod{n}\] for some $0\leqslant i,j\leqslant\frac{s+1}{2}m-1$.  Taking module $sm+2$ on both sides yields
   \[j-i\equiv \frac{s+1}{2}m\pmod{sm+2}\]
   which is a contradiction to $0\leqslant i,j\leqslant\frac{s+1}{2}m-1$.
\end{proof}

Let $H$ be a multiplicative subgroup of $\F_{q^2}^*$ of order $n=\frac{q^2-1}{s}$, i.e, $H=\langle g^s\rangle$. Now denote by $H=\{\alpha_1, \alpha_2, \cdots, \alpha_n\}$ with $\alpha_k=g^{sk}$ and $\mu=\frac{s+1}{2}m$.

\begin{theorem}\label{Main}
    There exists an Hermitian self-orthogonal GRS code with parameters $[n,k,n-k+1]_{q^2}$ where $k\leqslant\frac{s+1}{2}m$. Moreover, there exists quantum MDS code with parameters $[[n,n-2k, k+1]]_q$.
\end{theorem}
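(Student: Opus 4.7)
My plan is to apply the Hermitian-orthogonality criterion recalled from \cite{R} just before Lemma \ref{000}: it suffices to produce $\ov \in (\F_{q^2}^*)^n$ such that
\[
\sum_{l=1}^{n} v_l^{q+1}\alpha_l^{qi+j} = 0 \qquad \text{for all } 0 \leq i, j \leq \mu - 1,
\]
since this immediately yields Hermitian self-orthogonality of ${\rm GRS}_k(\oa,\ov)$ for every $k \leq \mu$. I would first record the two arithmetic identities that drive everything: expanding $(q+1)\mu = \tfrac{s+1}{2}n$ gives $q\mu \equiv -\mu \pmod{n}$, and the standard geometric-series argument shows $\sum_{l=1}^n \alpha_l^t$ equals $n$ when $n\mid t$ and $0$ otherwise.

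Next, I would look for $\ov$ of the form $v_l^{q+1} = c\,\alpha_l^{\mu} + c^q\,\alpha_l^{-\mu}$ with $c \in \F_{q^2}^*$ to be tuned. The identity $q\mu \equiv -\mu \pmod{n}$ makes $(v_l^{q+1})^q = v_l^{q+1}$ automatically, so $v_l^{q+1} \in \F_q$; since the norm map $\F_{q^2}^* \to \F_q^*$ is surjective, an appropriate $v_l \in \F_{q^2}^*$ exists as soon as $v_l^{q+1}\neq 0$ for every $l$.

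The hard step is exactly this nonvanishing. One checks that $v_l^{q+1} = 0$ iff $c^{q-1} = -\alpha_l^{2\mu}$; both sides lie in the cyclic group of $(q+1)$-th roots of unity, which has order $q+1$. Since $2s\mu = (s+1)(q-1)$, a short computation shows that $\{\alpha_l^{2\mu} : 1 \leq l \leq n\}$ has cardinality $(q+1)/d$ with $d = \gcd(s+1, q+1)$. Because $s$ is odd and $q$ odd forces $m$ to be even, both $s+1$ and $q+1$ are even, whence $d \geq 2$, leaving at least $(q+1)(d-1)/d > 0$ admissible $(q+1)$-th roots of unity to serve as $c^{q-1}$; I fix any such $c$.

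The orthogonality identity then expands as
\[
\sum_{l=1}^{n} v_l^{q+1}\alpha_l^{qi+j} = c\sum_{l=1}^{n}\alpha_l^{\mu + qi + j} + c^q\sum_{l=1}^{n}\alpha_l^{-\mu + qi + j},
\]
and both character sums vanish provided $qi + j \not\equiv \pm\mu \pmod{n}$. Lemma \ref{000} handles the $+\mu$ case directly; for $-\mu$, multiplying $qi+j \equiv -\mu$ by $q$ and using $q^2 \equiv 1 \pmod{n}$ together with $q\mu \equiv -\mu$ converts it into $qj + i \equiv \mu \pmod{n}$, which Lemma \ref{000} again forbids. To finish, the Hermitian dual of ${\rm GRS}_k(\oa,\ov)$ is an $[n, n-k, k+1]_{q^2}$ MDS code containing its own Hermitian dual, so Corollary \ref{HCQ} delivers the claimed $[[n, n-2k, k+1]]_q$ quantum MDS code.
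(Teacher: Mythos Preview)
Your argument is correct and follows essentially the same route as the paper: your ansatz $v_l^{q+1}=c\,\alpha_l^{\mu}+c^q\alpha_l^{-\mu}$ is exactly the paper's choice $v_r^{q+1}=\mathrm{Tr}_{\F_{q^2}/\F_q}(t\alpha_r^{-\mu})$ (take $t=c^q$ and use $q\mu\equiv-\mu\pmod n$), and both proofs reduce orthogonality to the character-sum vanishing governed by Lemma~\ref{000}. The only substantive difference is in the nonvanishing step: the paper pins down an explicit $t\in\{1,g\}$ via a parity contradiction, whereas you give a cleaner counting argument showing the forbidden set $\{-\alpha_l^{2\mu}\}$ is a coset of index $\ge 2$ inside the $(q{+}1)$-th roots of unity; you also make explicit the $-\mu$ case of the congruence (via the $q$-multiplication symmetry), which the paper leaves implicit.
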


\begin{proof}

  Denote by $\overrightarrow{\alpha}=(\alpha_1,\cdots, \alpha_n)$ and $\mu=\frac{s+1}{2}m$. By Theorem 2, the main step is to find  $\overrightarrow{v}=(v_1,\cdots, v_n)$ satisfying $\langle \overrightarrow{\alpha}^{qi+j}, \overrightarrow{v}^{q+1}\rangle_h=0$ for any $0\leqslant i, j\leqslant k-1$. Then $\mathrm{GRS}_k(\overrightarrow{\alpha}, \overrightarrow{v})$ is Hermitian self-orthogonal.

    {\bf Firstly}, note that $\alpha_r^{-\mu}=g^{-rs\mu}=g^{q^2-1-rs\mu}$. We claim that for any $1\leqslant r\leqslant n$, ${\rm Tr}_{\F_{q^2}/\F_q}(\alpha_r^{-\mu})$ and ${\rm Tr}_{\F_{q^2}/\F_q}(g\alpha_r^{-\mu})$ can not be $0$ at the same time. Otherwise
    $${\rm Tr}_{\F_{q^2}/\F_q}(\alpha_r^{-\mu})={\rm Tr}_{\F_{q^2}/\F_q}(g\alpha_r^{-\mu})=0$$
    can imply
    $$g^{rs\mu(q-1)}=g^{(rs\mu-1)(q-1)}=-1.$$
    The $g^{q-1}=1$ which contradicts that $g$ is primitive in $\F_{q^2}$.

    {\bf Secondly}, we prove that there exists $t\in \F_{q^2}^*$ such that ${\rm Tr}_{\F_{q^2}/\F_q}(t\alpha_r^{-\mu})\neq 0$ for all $r$, $1\leqslant r\leqslant n$.

    If for every $1\leqslant r\leqslant n$, we have $\alpha^{-\mu}_r\neq -1$. Then when $t=1$, ${\rm Tr}_{\F_{q^2}/\F_q}(t\alpha_r^{-\mu})\neq 0$ for all $r$, $1\leqslant r\leqslant n$. Otherwise, we will demonstrate that when $t=g$, ${\rm Tr}_{\F_{q^2}/\F_q}(t\alpha_r^{-\mu})\neq 0$ as follows: suppose that there exists $1\leqslant r_0\leqslant n$ such that
    $$\alpha^{-\mu}_r=g^{r_0s\mu(q-1)}=-1.$$
    Therefore $\frac{q^2-1}{2}\mid r_0s\mu(q-1)$ which is equivalent to
    \begin{equation}
    (q+1)(2l'+1)=r_0s(s+1)m
    \end{equation}
    for some integer $l'$. If there exists $1\leqslant r_1 \leqslant n$ such that ${\rm Tr}_{\F_{q^2}/\F_q}(g\alpha_{r_1}^{-\mu})=0$, then
    $$g\alpha_{r_1}^{-\mu}=g^{(r_1s\mu-1)(q-1)}=-1,$$
    which is equivalent to
    \begin{equation}
    (q+1)(2l''+1)=r_1s(s+1)m-2
    \end{equation}
    with some integer $l''$. By subtracting (2) from (1), we can get
    $$2(q+1)(l'-l'')=2(r_0-r_1)s(s+1)m+2.$$ For $q=sm+1$ is odd, $sm$ must be even. Consequently, $s(sm+m)$ is also even. Therefore, the above equality does not hold, which leads to a contradiction. Therefore, for $1\leqslant r \leqslant n$ we have ${\rm Tr}_{\F_{q^2}/\F_q}(g\alpha_r^{-\mu})\neq0$.

     Denote by $\overrightarrow{v}=(v_1, \cdots, v_n)$ such that $v_r^{q+1}={\rm Tr}_{\F_{q^2}/\F_q}(g\alpha_r^{-\mu})$. Then Lemma \ref{000} implies $\langle \overrightarrow{\alpha}^{qi+j}, \overrightarrow{v}^{q+1}\rangle_h=0$ for any $0\leqslant i, j\leqslant k-1$. In this case, the code $\mathrm{GRS}_k(\overrightarrow{\alpha}, \overrightarrow{v})$ is Hermitian self-orthogonal with parameters $[n,k,n-k+1]_{q^2}$. Then applying Corollary \ref{HCQ} to the MDS code $\mathrm{GRS}_k(\overrightarrow{\alpha}, \overrightarrow{v})$, we get quantum MDS codes with parameters $[[n,n-2k, k+1]]_q$.
\end{proof}

\begin{corollary}\label{001}

    There exist quantum MDS codes with parameters $[[\lambda n,\lambda n-2k,k+1]]_q$, where $k\leqslant\frac{s+1}{2}m, 1\leqslant\lambda\leqslant s$.
\end{corollary}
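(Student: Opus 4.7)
The plan is to extend Theorem \ref{Main} by replacing the single subgroup $H=\langle g^s\rangle$ with a union of $\lambda$ of its cosets in $\F_{q^2}^*$. Since $[\F_{q^2}^*:H]=s$, the cosets may be listed as $g^{t}H$ for $t=0,1,\ldots,s-1$; I would fix any $\lambda$ of them, say $g^{t_1}H,\ldots,g^{t_\lambda}H$, and list their elements as $\alpha_{l,r}=g^{t_l}\omega^r$ with $\omega=g^s$, $1\leqslant l\leqslant\lambda$, $1\leqslant r\leqslant n$, giving an evaluation vector $\oa$ of length $\lambda n$.

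On each coset I would mimic the construction from the proof of Theorem \ref{Main}. For fixed $l$ the family $\{\alpha_{l,r}^{-\mu}\}_r$ is the scalar translate $g^{-\mu t_l}\cdot\{\omega^{-\mu r}\}_r$, so the two-case analysis there still produces $\tau_l\in\{1,g\}$ satisfying ${\rm Tr}_{\F_{q^2}/\F_q}(\tau_l\alpha_{l,r}^{-\mu})\neq 0$ for every $r$. Using surjectivity of the norm $\F_{q^2}^*\to\F_q^*$, take $v_{l,r}\in\F_{q^2}^*$ with $v_{l,r}^{q+1}={\rm Tr}_{\F_{q^2}/\F_q}(\tau_l\alpha_{l,r}^{-\mu})$ and assemble them into $\ov$. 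Expanding the trace in the Hermitian pairing yields
\begin{align*}
\langle\oa^{qi+j},\ov^{q+1}\rangle_h &= \sum_{l=1}^{\lambda}\tau_l\,g^{t_l(qi+j-\mu)}\sum_{r=1}^{n}\omega^{r(qi+j-\mu)} \\
&\quad + \sum_{l=1}^{\lambda}\tau_l^{q}\,g^{t_l(qi+j-q\mu)}\sum_{r=1}^{n}\omega^{r(qi+j-q\mu)}.
\end{align*}
Each inner geometric sum in $\omega$ vanishes whenever its exponent is nonzero modulo $n$; for $qi+j-\mu$ this is Lemma \ref{000} directly, and for $qi+j-q\mu$, multiplying by $q$ and using $q^2\equiv 1\pmod n$ rewrites the condition as $i+qj\not\equiv\mu\pmod n$, which is again Lemma \ref{000} after interchanging $i$ and $j$. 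Therefore $\mathrm{GRS}_k(\oa,\ov)$ is a Hermitian self-orthogonal MDS code with parameters $[\lambda n,k,\lambda n-k+1]_{q^2}$, and Corollary \ref{HCQ} yields the claimed $[[\lambda n,\lambda n-2k,k+1]]_q$ quantum MDS code.

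The only step I anticipate requiring care is the per-coset nonvanishing of the trace. In the analogs of equations (1) and (2) from the proof of Theorem \ref{Main}, the coset shift $t_l$ contributes the same additive term to both sides and therefore cancels upon subtraction; the resulting parity contradiction, which rests on $sm$ being even because $q=sm+1$ is odd, then carries over verbatim to every coset. Consequently the obstacle is a routine transfer of the earlier argument rather than a genuine new difficulty.
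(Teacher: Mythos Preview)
Your argument is correct, but it takes a noticeably longer route than the paper's. The paper does not rebuild the weight vector coset by coset. Instead it keeps the very same $(v_1,\ldots,v_n)$ produced in Theorem~\ref{Main}, chooses coset representatives $g_1,\ldots,g_\lambda$, and sets
\[
\oa=(g_1\alpha_1,\ldots,g_1\alpha_n,\ldots,g_\lambda\alpha_1,\ldots,g_\lambda\alpha_n),\qquad
\ov=(g_1v_1,\ldots,g_1v_n,\ldots,g_\lambda v_1,\ldots,g_\lambda v_n).
\]
The Hermitian pairing then factors as
\[
\sum_{l=1}^{\lambda}\sum_{r=1}^{n}(g_l\alpha_r)^{qi+j}(g_lv_r)^{q+1}
=\Bigl(\sum_{l=1}^{\lambda}g_l^{\,qi+j+q+1}\Bigr)\Bigl(\sum_{r=1}^{n}\alpha_r^{qi+j}v_r^{q+1}\Bigr),
\]
and the inner sum is already zero by Theorem~\ref{Main}. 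No per-coset trace analysis, no second invocation of Lemma~\ref{000}, and no $\tau_l$ are needed.

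Your approach, by contrast, re-runs the trace construction on each coset with a possibly coset-dependent $\tau_l\in\{1,g\}$ and then kills two geometric sums using Lemma~\ref{000} (once directly, once after multiplying by $q$). The check that the $t_l$-contribution cancels upon subtracting the analogues of (1) and (2) is correct, and the mod-$4$ parity contradiction indeed transfers verbatim. So nothing is wrong; it is simply more work for the same conclusion. What your route buys is a slightly more flexible weight vector (your $v_{l,r}^{q+1}=\mathrm{Tr}(\tau_l\alpha_{l,r}^{-\mu})$ need not equal the paper's $g_l^{q+1}v_r^{q+1}$), but since both choices already yield Hermitian self-orthogonal GRS codes with identical parameters, that flexibility is not used here.
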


\begin{proof}
  Choose $H=\{\alpha_1, \cdots, \alpha_n\}$ and $(v_1,\cdots, v_n)$ as in Theorem \ref{Main}. Choose distinct cosets $g_1H, g_2H, \cdots, g_\lambda H$ with $\lambda\leqslant s$.

    Let
    $$
    \oa=(g_1\alpha_1, g_1\alpha_2, \cdots, g_1\alpha_n, g_2\alpha_1, g_2\alpha_2, \cdots, g_2\alpha_n, \cdots,
    g_\lambda\alpha_1,g_\lambda\alpha_2,\cdots,g_\lambda\alpha_n)
    $$
    and
    $$
    \ov=(g_1v_1, g_1v_2, \cdots, g_1v_n, g_2v_1, g_2v_2, \cdots, g_2v_n, \cdots,g_\lambda v_1,g_\lambda v_2,\cdots,g_\lambda v_n).
    $$
   Then $\langle \overrightarrow{\alpha}^{qi+j}, \overrightarrow{v}^{q+1}\rangle_h=0$.  Hence we obtain Hermitian self-orthogonal GRS code ${\rm GRS}_k(\oa,\ov)$ with parameters $[\lambda n,k,\lambda n-k+1]_{q^2}$. As a result, quantum MDS codes with parameters $[[\lambda n, \lambda n-2k, k+1]]_q$ can be obtained.
\end{proof}

\begin{example}

    In the case $s=3,\lambda\leqslant3$ and $q\equiv3m+1$ being even, there exist quantum MDS codes with parameters
    $[[\lambda\frac{q^2-1}{3},\frac{\lambda q^2-4q+4-\lambda}{3},\frac{2q+1}{3}]]_q$ whose minimal distance are larger than $\frac{2q}{3}\approx0.667q$.
\end{example}

\begin{example}

    In the case $s=5,\lambda\leqslant5$ and $q\equiv5m+1$ being even, there exist quantum MDS codes with parameters
    $[[\lambda\frac{q^2-1}{5},\frac{\lambda q^2-6q+6-\lambda}{5},\frac{3q+2}{5}]]_q$ whose minimal distance are larger than $\frac{3q}{5}=0.6q$.
\end{example}

\section{The case $n=q^2-s(q+1)$}

\qquad In this section, denote $\F_{q^2}=\{\alpha_1,\alpha_2,\cdots,\alpha_{q^2-q-1},\alpha_{q^2-q},\alpha_{q^2-q+1},\cdots,\alpha_{q^2}\}$, where the last $q+1$ elements are in  $S=\langle g^{q-1}\rangle$. Then for $1\leqslant i\leqslant q^2-q-1$ we have
$$
\prod\limits_{j=1,j \ne i}^{q^2-q-1}(\alpha_i-\alpha_j)=\frac{\prod\limits_{j=1,j \ne i}^{q^2}(\alpha_i-\alpha_j)}{\prod\limits_{j=q^2-q}^{q^2}(\alpha_i-\alpha_j)}=
\frac{-1}{\alpha_i^{q+1}-1}\in \F_q.
$$

Then according to Lemma 5 in \cite{LXW}, when $qi+j\leqslant q^2-q-3$, there exists $v_i\in \F_{q^2}$ such that $v_i^{q+1}=\prod\limits_{j=1,j \ne i}^{q^2-q-1}(\alpha_i-\alpha_j)^{-1}$ such that
$$
\sum\limits_{s=1}^{q^2-q-1}\alpha_s^{qi+j}v_s^{q+1}=\sum\limits_{s=1}^{q^2-q-1}\frac{\alpha_s^{qi+j}}{\prod\limits_{k=1,k \ne s}^{q^2-q-1}(\alpha_s-\alpha_k)}=0.
$$

Set $\oa=\{\alpha_1,\alpha_2,\cdots,\alpha_{q^2-q-1}\}$ and $\ov=\{v_1,v_2,\cdots,v_{q^2-q-1}\}$. Then by Corollary \ref{HCQ}, ${\rm GRS}_k(\oa,\ov)$ is Hermitian self-orthogonal with $k\leqslant \frac{q^2-q-3}{q+1}+1$.

\begin{lemma}

    There exist quantum MDS codes with parameters $[[q^2-q-1,q^2-q-1-2k,k+1]]_q$, where $k\leqslant \frac{q^2-q-3}{q+1}+1$.
\end{lemma}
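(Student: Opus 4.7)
The plan is to assemble the ingredients already prepared in this section and apply the Hermitian construction for quantum MDS codes (Corollary \ref{HCQ}). Concretely, I would take $\oa = (\alpha_1, \ldots, \alpha_{q^2-q-1})$ and $\ov = (v_1, \ldots, v_{q^2-q-1})$ as constructed just above the lemma, where each $v_i \in \F_{q^2}$ satisfies $v_i^{q+1} = \prod_{j \neq i}(\alpha_i - \alpha_j)^{-1}$. Such a $v_i$ exists precisely because the displayed calculation shows that $\prod_{j\neq i}(\alpha_i - \alpha_j) = -1/(\alpha_i^{q+1} - 1)$ lies in $\F_q^*$ (using that $\alpha_i \notin S$ forces $\alpha_i^{q+1} \neq 1$), so its inverse is a $(q+1)$-th power in $\F_{q^2}^*$.

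The next step is to verify the hypothesis of Theorem 2, that is, $\langle \oa^{qi+j}, \ov^{q+1}\rangle_h = 0$ for every $0 \leqslant i, j \leqslant k-1$. The cited identity from \cite{LXW} already established in the paragraph above the lemma gives $\sum_{s=1}^{q^2-q-1} \alpha_s^{qi+j} v_s^{q+1} = 0$ whenever $qi + j \leqslant q^2 - q - 3$. For indices in the range $0 \leqslant i, j \leqslant k-1$, one has $qi + j \leqslant q(k-1) + (k-1) = (q+1)(k-1)$, so the orthogonality is guaranteed as long as $(q+1)(k-1) \leqslant q^2 - q - 3$, which is exactly $k \leqslant \frac{q^2-q-3}{q+1} + 1$, the hypothesis of the lemma.

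With the self-orthogonality in place, ${\rm GRS}_k(\oa, \ov)$ is a Hermitian self-orthogonal $[q^2-q-1, k, q^2-q-k]_{q^2}$ MDS code. Applying Corollary \ref{HCQ} to this code produces a quantum MDS code with parameters $[[q^2-q-1, q^2-q-1-2k, k+1]]_q$, as claimed.

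There is essentially no hard obstacle: the main analytic facts (the product lying in $\F_q$ and the power-sum vanishing) were carried out in the narrative preceding the statement, so the remaining work is purely combinatorial bookkeeping of the range of $qi + j$. The only place where one has to be careful is noting that $\alpha_i^{q+1} - 1 \neq 0$ for the selected $\alpha_i$, which is precisely why those $q+1$ elements of $S$ were excluded from $\oa$ to begin with.
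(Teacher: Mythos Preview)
Your proposal is correct and follows exactly the paper's approach: the paper's proof is the single line ``Apply Corollary \ref{HCQ} to ${\rm GRS}_k(\oa,\ov)$ with parameters $[q^2-q-1,k,q^2-q-k]_{q^2}$,'' relying on the computations done in the paragraphs preceding the lemma, and you have simply unpacked those steps (existence of $v_i$, the range check $(q+1)(k-1)\leqslant q^2-q-3$, and the application of Corollary \ref{HCQ}) in more detail.
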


\begin{proof}

    Apply Corollary \ref{HCQ} to ${\rm GRS}_k(\oa,\ov)$ with parameters $[q^2-q-1,k,q^2-q-k]_{q^2}.$
\end{proof}

\begin{theorem}\label{qodd}

    If $q$ is odd prime power, there exist quantum MDS codes with parameters $[[q^2-s(q+1),q^2-s(q+1)-2k,k+1]]_q$, where $k\leqslant \frac{q^2-s(q+1)-2}{q+1}+1,s\leqslant \frac{q-1}{2}$.
\end{theorem}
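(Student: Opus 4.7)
The plan is to extend the construction of the preceding lemma from one removed coset of $S=\langle g^{q-1}\rangle$ to $s$ removed cosets. I would fix any $s$ distinct cosets $g^{b_1}S,\dots,g^{b_s}S$ of $S$ in $\F_{q^2}^{*}$ (there are $q-1$ cosets in total, and the bound $s\leqslant(q-1)/2$ is an integer bound thanks to $q$ odd), and take the evaluation vector $\oa$ to consist of the remaining $n=q^2-s(q+1)$ elements of $\F_{q^2}$, including $0$.

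For each $\alpha_i\in\oa$, the central computation is
\[
\prod_{j\ne i}(\alpha_i-\alpha_j)\;=\;\frac{\prod_{\alpha\in\F_{q^2},\,\alpha\ne\alpha_i}(\alpha_i-\alpha)}{\prod_{l=1}^{s}\prod_{\zeta\in g^{b_l}S}(\alpha_i-\zeta)}\;=\;\frac{-1}{\prod_{l=1}^{s}\bigl(\alpha_i^{q+1}-g^{b_l(q+1)}\bigr)},
\]
where the numerator equals $-1$ by differentiating $x^{q^2}-x$ at $\alpha_i$, and each denominator factor uses the identity $\prod_{\zeta\in g^{b_l}S}(x-\zeta)=x^{q+1}-g^{b_l(q+1)}$ (a degree-$(q+1)$ monic polynomial whose roots are precisely $g^{b_l}S$). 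Because $\alpha_i^{q+1}$ and $g^{b_l(q+1)}$ both lie in $\F_q$, and $\alpha_i\notin g^{b_l}S$ forces $\alpha_i^{q+1}\ne g^{b_l(q+1)}$, this whole product is a nonzero element of $\F_q^{*}=(\F_{q^2}^{*})^{q+1}$, so there exists $v_i\in\F_{q^2}^{*}$ with $v_i^{q+1}=\bigl(\prod_{j\ne i}(\alpha_i-\alpha_j)\bigr)^{-1}$.

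To close the proof I would apply the divided-differences identity $\sum_{i=1}^{n}\alpha_i^{m}/\prod_{j\ne i}(\alpha_i-\alpha_j)=0$ for $0\leqslant m\leqslant n-2$ (Lemma 5 of \cite{LXW}), taking $m=qa+b$. Since $0\leqslant a,b\leqslant k-1$ gives $qa+b\leqslant(q+1)(k-1)$, the hypothesis $k\leqslant\frac{q^2-s(q+1)-2}{q+1}+1$ forces $qa+b\leqslant n-2$, and hence $\langle\oa^{qa+b},\ov^{q+1}\rangle_h=0$ for all such $a,b$. Theorem 2 then makes ${\rm GRS}_k(\oa,\ov)$ Hermitian self-orthogonal with parameters $[n,k,n-k+1]_{q^2}$, and Corollary \ref{HCQ} yields the quantum MDS code $[[n,n-2k,k+1]]_q$. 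The main obstacle I expect is step two, verifying that the product of differences lies in $\F_q^{*}$ at every surviving evaluation point; without this, no $v_i\in\F_{q^2}^{*}$ with the prescribed $(q+1)$-th power exists. Once the factorization of $\prod_{\zeta\in g^{b_l}S}(x-\zeta)$ is in hand, the rest is a direct application of the $s=1$ case, and the odd-$q$ hypothesis enters only cosmetically, to make $(q-1)/2$ an integer and to fix the regime in which the resulting minimum distance exceeds $q/2$.
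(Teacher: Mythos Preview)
Your proof is correct and follows the same strategy as the paper: delete $s$ cosets of $S=\langle g^{q-1}\rangle$ from $\F_{q^2}$, factor $\prod_{\zeta\in g^{b}S}(x-\zeta)=x^{q+1}-g^{b(q+1)}$ to show each Lagrange product $\prod_{j\ne i}(\alpha_i-\alpha_j)$ lies in $\F_q^{*}$, take $(q+1)$-th roots for the $v_i$, and finish with the divided-differences identity plus Corollary~\ref{HCQ}. The one noteworthy difference is that the paper removes the \emph{specific} cosets $S,\,g^{q+1}S,\dots,g^{((q-1)/2-1)(q+1)}S$ whose representatives lie in $\F_q^{*}$; since $S\cap\F_q^{*}=\{\pm1\}$ for odd $q$, there are exactly $(q-1)/2$ such cosets, and this is where the paper's bound $s\leqslant(q-1)/2$ actually comes from. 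You instead allow arbitrary cosets and correctly observe that $\alpha_i^{q+1}-g^{b(q+1)}\in\F_q$ holds regardless of which coset is chosen, so your argument already covers all $s\leqslant q-1$ and makes the odd-$q$ hypothesis inessential---which is precisely the content of the paper's subsequent even-$q$ theorem.
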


\begin{proof}

    Since $S$ is the set of $(q+1)$-th unit roots in $\F_{q^2}$ while $\F_q^*$ is the set of $(q-1)$-th unit roots in $\F_{q^2}$,   the intersection of $S$ and $\F_q^*$ is $\{\pm1\}$. So the different cosets of $S$ with representative in $\F_q$ are $S,g^{q+1}S,g^{2(q+1)}S,\cdots,g^{(\frac{q-1}{2}-1)(q+1)}S$. Decompose $\F_q=\{\alpha_1,\alpha_2,\dots,\alpha_{\frac{q^2-1}{2}}\}\bigcup S\bigcup g^{q+1}S\bigcup\cdots\bigcup g^{(\frac{q-1}{2}-1)(q+1)}S$.

    Therefore, for $s\leqslant\frac{q-1}{2},k\leqslant\frac{q^2-s(q+1)-2}{q+1}+1$, let
    $$
    \begin{aligned} 
    \nonumber
        \oa=( & \alpha_1,\alpha_2,\dots,\alpha_{q^2-s(q+1)},g^{q-1},g^{2(q-1)},\cdots,g^{(q+1)(q-1)},g^{2q},g^{3q-1},\cdots,g^{q+1},\cdots,      \\
               & g^{(\frac{q-1}{2}-1-s)(q+1)}g^{q-1},g^{(\frac{q-1}{2}-1-s)(q+1)}g^{2(q-1)},\cdots,g^{(\frac{q-1}{2}-1-s)(q+1)}g^{(q+1)(q-1)})    \\
    \end{aligned}
    $$
    and $\ov=(v_1,v_2,\cdots,v_{q^2-s(q+1)})$ where
    $$
    \begin{aligned} 
    \nonumber
        v_i^{q+1} &= \prod\limits_{j=1,j \ne i}^{q^2-s(q+1)}(\alpha_i-\alpha_j)^{-1}           \\
                  &= -\prod\limits_{j=1,j\neq i}^{q+1}(\alpha_i-g^{j(q-1)})\prod\limits_{j=1,j\neq i}^{q+1}(\alpha_i-g^{q+1}g^{j(q-1)})\cdots
                  \prod\limits_{j=1,j\neq i}^{q+1}(\alpha_i-g^{(\frac{q-1}{2}-1-s)(q+1)}g^{j(q-1)})
    \end{aligned}
    $$
    belongs to $\F_q$(Then $v_i$ exists).

    Then, for $i,j\leqslant k-1$ we have \[\sum\limits_{s=1}^{q^2-s(q+1)}\alpha_s^{qi+j}v_s^{q+1}=\sum\limits_{s=1}^{q^2-s(q+1)}\frac{\alpha^{qi+j}_s}{\prod\limits_{j=1,j \ne i}^{q^2-s(q+1)}(\alpha_i-\alpha_j)}=0.\] Now ${\rm GRS}_k(\oa,\ov)$  is  an Hermitian self-orthogonal MDS code with parameters $[q^2-s(q+1),k,q^2-s(q+1)-k+1]_{q^2}$. So there exists quantum MDS code with parameters $[[q^2-s(q+1),q^2-s(q+1)-2k,k+1]]_q$.
\end{proof}

\begin{theorem}

    If $q$ is even, there exist quantum MDS codes with parameters $[[q^2-s(q+1),q^2-s(q+1)-2k,k+1]]_q$, where $k\leqslant q-1-s,s\leqslant q-1$.
\end{theorem}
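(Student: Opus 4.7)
The plan is to mirror the odd-characteristic argument of Theorem~\ref{qodd}, adjusting the coset decomposition to exploit the structural difference in even characteristic. Since $q$ is a power of $2$, we have $\gcd(q+1,q-1)=1$, hence the subgroup $S=\langle g^{q-1}\rangle$ of $(q+1)$-th roots of unity satisfies $S\cap\F_q^{*}=\{1\}$. The quotient $\F_{q^2}^{*}/S$ has order $q-1$, and each coset contains a unique element of $\F_q^{*}$; consequently $\F_{q^2}^{*}=\bigsqcup_{g\in\F_q^{*}} gS$ supplies $q-1$ cosets with representatives in $\F_q^{*}$, compared with only $(q-1)/2$ in the odd case. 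This explains why $s$ is now allowed to range up to $q-1$.

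To build an evaluation set of length $n=q^2-s(q+1)$, I would discard any $s$ of the above cosets, say $g_{i_1}S,\ldots,g_{i_s}S$ with $g_{i_r}\in\F_q^{*}$, and enumerate the remaining $n$ elements of $\F_{q^2}$ (including $0$) as $\oa=(\alpha_1,\ldots,\alpha_n)$. The key step is to verify that $\prod_{j\ne i}(\alpha_i-\alpha_j)\in\F_q^{*}$ for each $i$. Combining the identity $\prod_{\beta\in\F_{q^2},\,\beta\ne\alpha_i}(\alpha_i-\beta)=-1$ with the factorization $\prod_{\zeta\in S}(\alpha-g\zeta)=\alpha^{q+1}-g^{q+1}$ yields
$$
\prod_{\substack{j=1\\ j\ne i}}^{n}(\alpha_i-\alpha_j)=\frac{-1}{\prod_{r=1}^{s}\bigl(\alpha_i^{q+1}-g_{i_r}^{q+1}\bigr)}.
$$
Since $\alpha_i^{q+1}$ is the norm of $\alpha_i$ and $g_{i_r}^{q+1}=g_{i_r}^{2}\in\F_q^{*}$, the right-hand side lies in $\F_q^{*}$; surjectivity of the norm map $x\mapsto x^{q+1}\colon\F_{q^2}^{*}\to\F_q^{*}$ then produces $v_i\in\F_{q^2}^{*}$ with $v_i^{q+1}=\prod_{j\ne i}(\alpha_i-\alpha_j)^{-1}$.

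With $\ov=(v_1,\ldots,v_n)$ fixed, the classical Lagrange identity $\sum_{s=1}^{n}\alpha_s^{m}/\prod_{t\ne s}(\alpha_s-\alpha_t)=0$ valid for $0\le m\le n-2$ turns the self-orthogonality criterion $\langle\oa^{qi+j},\ov^{q+1}\rangle_h=0$ into the requirement $qi+j\le n-2=q^2-s(q+1)-2$. For $0\le i,j\le k-1$ the largest occurring exponent is $(k-1)(q+1)$, and this meets the bound precisely when $k\le q-1-s$. Under this restriction $\mathrm{GRS}_k(\oa,\ov)$ is Hermitian self-orthogonal with parameters $[n,k,n-k+1]_{q^2}$, so Corollary~\ref{HCQ} supplies the asserted $[[q^2-s(q+1),\,q^2-s(q+1)-2k,\,k+1]]_q$ quantum MDS code. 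The principal obstacle is the $\F_q$-rationality computation via the coset factorization; once that is secured, the remaining steps are a routine application of the self-orthogonality criterion and the Hermitian construction.
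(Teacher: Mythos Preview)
Your proposal is correct and follows exactly the approach the paper intends: it simply refers back to Theorem~\ref{qodd} and notes that for even $q$ the cosets of $S$ with representatives in $\F_q$ are $S, g^{q+1}S,\ldots,g^{(q-2)(q+1)}S$, i.e.\ all $q-1$ of them rather than $(q-1)/2$. Your write-up is in fact more detailed than the paper's one-line proof; the only cosmetic issue is the reuse of the symbol $g$ for both the primitive element of $\F_{q^2}^{*}$ and the running coset representative in $\bigsqcup_{g\in\F_q^{*}} gS$.
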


\begin{proof}

    The proof is similar as Theorem \ref{qodd}. Notice that in this case the different cosets of $S$ with representative elements in $\F_q$ are precisely $S,g^{q+1}S,g^{2(q+1)}S,\cdots,g^{(q-2)(q+1)}S$.
\end{proof}

\begin{example}

    We have quantum MDS codes with parameters $[[q^2-s(q+1),q^2-(s+2)(q-1),q-s]]_q$ where $s\leqslant q-1$ whose minimal distance is approximately $q$ when $q$ is large and $s$ is small.
\end{example}

\section{Conclusion}

\qquad In this paper, we construct two new classes of quantum MDS codes by using Hermitian self-orthogonal GRS codes from some multiplicative subgroup of $\F_{q^2}^*$. Their minimum distances are large than $\frac{q}{2}$ in some cases. Moreover, the finite field $\F_{q^2}$ has abundant additive group structure as well as multiplicative group structure. More quantum MDS codes with larger minimal distances may be constructed if we can properly combine those group structures together.

\section{Acknowledgements}
The authors are supported by National Natural Science Foundation of China
(Nos. 12171191, 12271199) and by self-determined research funds of CCNU from the colleges’ basic research and operation of MOE CCNU22JC001.

\bibliographystyle{IEEEtran}
\bibliography{IEEEexample}

\begin{thebibliography}{99}

\bibitem[1]{AC}V. Aggarwal, A. R. Calderbank,
Boolean functions, projection operators, and quantum error correcting codes,{\it IEEE Trans. Inform. Theroy} {\bf54} (2008) no.4, 1700-1707.

\bibitem[2]{AK}Ashikhmin. A, Knill. E,
Nonbinary quantum stabilizer codes,{\it IEEE Trans. Inf. Theory} {\bf47(7)} (2001), 3065-3072.

\bibitem[3]{B}S. Ball,
Some constructions of quantum MDS codes,{\it Des. Codes Cryptogr.} {\bf89} (2021), 811-821.

\bibitem[4]{CLX}H. Chen, S. Ling, and C. Xing,
Quantum codes from concatenated algebraic- geometric codes,{\it IEEE Trans. Inform. Theory} {\bf51} (2005), no.8, 2915-2920.

\bibitem[5]{CLZ}B. Chen, S. Ling, G. Zhang,
Application of constacyclic codes to quantum MDS codes,{\it IEEE Trans.Inf. Theory} {\bf61(3)} (2015), 1474-1484.

\bibitem[6]{CRSS}A. R. Calderbank, E. M. Rains, P. W. Shor, N. J. A. Sloane,
Quantum error correction via codes over GF(4), {\it IEEE Trans. Inform. Theory} {\bf44} (1998) no.4, 1369-1387.

\bibitem[7]{FF.S}W. Fang, F.-W. Fu,
Some new constructions of quantum MDS codes,{\it IEEE Trans. Inf. Theory} {\bf65} (2019), 7840-7847.

\bibitem[8]{FF.T}W. Fang, F.-W. Fu,
Two new classes of quantum MDS codes,{\it Finite Fields Appl.} {\bf53} (2018), 85-98.

\bibitem[9]{FL}X. Fang, J. Luo,
New quantum MDS codes over finite fields,{\it Quantum Inf. Process.} {\bf19} (2020), p.16.

\bibitem[10]{GPHD}D. Gottesman,
Stabilizer codes and quantum error Correction, Caltech, Ph.D dissertation, Psadena, CA, 1997.

\bibitem[11]{G}G.G.L. Guardia,
New quantum MDS codes,{\it IEEE Trans. Inf. Theory} {\bf57(8)} (2011), 5551-5554.

\bibitem[12]{GB}M. Grassl, T. Beth,M. R\"{o}ttler,
On optimal quantum codes,{\it Int. J. Quantum Inf.} {\bf2} (2004), 757-775.

\bibitem[13]{GR}M. Grassl, M. Rotteler,
Quantum MDS codes over small fields,{\it Proc. Int. Symp. Inf. Theory (ISIT)} (2015), 1104-1108.

\bibitem[14]{GZ.Q}T. Zhang, G. Ge,
Quantum MDS codes with large minimum distance,{\it Des. Codes Cryptogr.} {\bf83} (2017) no.3, 503-517.

\bibitem[15]{GZ.S}T. Zhang, G. Ge,
Some new classes of quantum MDS codes from constacyclic codes,{\it IEEE Trans. Inform. Theory.} {\bf61} (2015) no.9, 5224-5228.

\bibitem[16]{HXC}X. He, L. Xu, H. Chen,
New $q$-ary quantum MDS codes with distances bigger than $\frac{q}{2}$,{\it Quantum Inf. Process.} {\bf15} (2016) no.7, 2745-2758.

\bibitem[17]{HYZ}L. Hu, Q. Yue, X. Zhu,
New quantum MDS codes from constacyclic codes,{\it Chin. Ann. Math.} {\bf37B(6)} (2016), 891-898.

\bibitem[18]{J}L.Jin,
Quantum stabilizer codes from maximal curves,{\it IEEE Trans.Inform.Theory} {\bf60} (2014) no.1, 313-316.

\bibitem[19]{JKW}L. Jin, H. Kan, J. Wen,
Quantum MDS codes with relatively large minimum distance from Hermitian self-orthogonal codes,{\it Des. Codes Cryptogr.} {\bf84} (2017), 463-471.

\bibitem[20]{JLFQ}R. Jin, J. Luo, X. Fang, L. Qu,
New constructions of quantum MDS codes over finite fields,{\it Quantum Inf. Process.} {\bf21 (12)} (2022), p.395.

\bibitem[21]{JLLX}L. Jin, S. Ling, J. Luo, C. Xing,
Application of classical Hermitian self-orthogonal MDS codes to Quantum MDS codes, {\it IEEE Trans. Inf. Theory} {\bf56} (2010), 4735-4740.

\bibitem[22]{JX.A}L. Jin, C. Xing,
A construction of new quantum MDS codes,{\it IEEE Trans. Inf. Theory} {\bf60} (2015), 2921-2925.

\bibitem[23]{JX.E}L. Jin, C. Xing,
Euclidean and Hermitian self-orthogonal algebraic geometry codes and their application to quantum codes,{\it IEEE Trans. Inf. Theory} {\bf58} (2012), 5484-5489.

\bibitem[24]{KKKS}A.Ketkar, A.Klappenecker, S.Kumar, and P.K.Sarvepalli,
Nonbinary stabilizer codes over finite fields,{\it IEEE Trans.Inform.Theory} {\bf52} (2006) no.11, 4892-4914.

\bibitem[25]{KZ}X. Kai, S. Zhu,
New quantum MDS codes from negacyclic codes,{\it IEEE Trans. Inf. Theory} {\bf59(2)} (2012), 1193-1197.

\bibitem[26]{KZL}X. Kai, S. Zhu, P. Li,
Constacyclic codes and some new quantum MDS codes,{\it IEEE Trans. Inf. Theory} {\bf60} (2014), 2080-2086.

\bibitem[27]{LLX}S.Ling, J.Luo, and C.Xing,
Generalization of Steane's enlargement construction of quantum codes and applications,{\it IEEE Trans. Inform. Theory} {\bf56} (2010) no.8, 4080-4084.

\bibitem[28]{LXG}S. Li, M. Xiong, G. Ge,
Pseudo-cyclic codes and the construction of quantum MDS codes,{\it IEEE Trans. Inf. Theory} {\bf62} (2016), 1703-1710.

\bibitem[29]{LXW}Z. Li, L. Xing, X. Wang,
Quantum generalized Reed-Solomon codes: unified framework for quantum MDS codes,{\it Phys. Rev. A } {\bf77} (2008).

\bibitem[30]{LXW2}Z.Li, L.Xing, and X.Wang,
A family of asymptotically good quantum codes based on code concatenation,{\it IEEE Trans. Inform. Theory} {\bf55} (2009) no.8, 3821- 3824.

\bibitem[31]{MFSN}MacWilliams, F.J., Sloane, N.J.A.,
The Theory of Error-Correcting Codes. North Holland, Amsterdam(1977).

\bibitem[32]{MLWZ}Z.Ma, R.Leng, Z.Wei, S.Zhong,
Constructing non-binary asymmetric quantum codes via graphs,{\it China Communications} {\bf10} (2013) no.2, 33-41.

\bibitem[33]{PTO}D.Poulin, J.P.Tillich, and H.Ollivier,
Quantum serial turbo codes,{\it IEEE Trans.Inform.Theory} {\bf55} (2009) no.6, 2776-2798.

\bibitem[34]{R}Rain. E. M,
Nonbinary quantum codes,{\it IEEE Trans. Inf. Theory} {\bf45(6)} (1999), 1827-1832.

\bibitem[35]{RGT}M. R\"{o}ttler, M. Grassl,T. Beth,
On quantum MDS codes, {\it Proc. Int. Symp. Inf. Theory (ISIT)} (2004), p.356.

\bibitem[36]{Shor}P. W. Shor,
Scheme for reducing decoherence in quantum memory,{\it Phys. Rev.A} {\bf52} (1995) no.4, 2493-2496.

\bibitem[37]{S}A. M. Steane,
Simple quantum error correcting codes,{\it Phys. Rev. A} {\bf54} (1996), 4741-4751.

\bibitem[38]{SY}X. Shi, Q. Yue, X. Zhu,
Construction of some new quantum MDS codes,{\it Finite Fields Appl.} {\bf46} (2017), 347-362.

\bibitem[39]{WZ}L. Wang, S. Zhu,
New quantum MDS codes derived from constacyclic codes,{\it Quantum Inf. Process.} {\bf14} (2015) no.3, 881-889.

\end{thebibliography}
\end{document}